\let\csname equation*\endcsname\relax
\let\csname endequation*\endcsname\relax
\DeclareMathOperator*{\argmin}{arg\,min}
\DeclareMathOperator*{\supp}{supp}
\renewcommand{\eqref}[1]{Eq.~\ref{#1}}
\newif\ifnotes
  \newcommand{\dhwc}[1]{{\color{Red}{{DHW COMMENT: #1}}}}
 \newcommand{\dhws}[1]{{\color{blue}{{DHW SUGGESTION: #1}}}}
 \newcommand{\ak}[1]{{\color{ForestGreen}{{AK: #1}}}}
 \newcommand{\dhwc}[1]{}
 \newcommand{\dhws}[1]{}
 \newcommand{\ak}[1]{}
\newtheorem{thm}{\protect\theoremname}
\providecommand{\theoremname}{Theorem}
\newcommand{\proc}{\mathcal{P}}
\newcommand{\imgPhi}{\mathcal{T}}
\newtheorem{theorem}{Theorem}
\newtheorem{proposition}[theorem]{Proposition}
\begin{document}

\title{Dependence of dissipation on the initial distribution over states}

 \author{Artemy Kolchinsky}
  \address{Santa Fe Institute, 1399 Hyde Park Road, Santa Fe, NM 87501, USA}

 \author{David H. Wolpert}
  \address{1 Santa Fe Institute, 1399 Hyde Park Road, Santa Fe, NM 87501, USA}
\address{2 Massachusetts Institute of Technology, 77 Massachusetts Ave, Cambridge, MA 02139, USA}
 \address{3 Arizona State University, Tempe, AZ 85281, USA}

\begin{abstract}
We analyze how the amount of work dissipated by a fixed nonequilibrium process depends on the initial distribution over states.
Specifically, we compare the amount of dissipation when the process is used with some specified initial distribution to the minimal amount 
of dissipation possible for any initial distribution. 
We show that the difference between those two amounts of dissipation is given by a simple information-theoretic function that depends only on the initial and final state distributions. Crucially, this difference is independent of the details of the process
relating those distributions. We then consider how dissipation depends on the initial distribution for a `computer', i.e., 
a nonequilibrium process whose dynamics over coarse-grained macrostates implement some desired input-output map. We 
show that our results still apply when stated in terms of distributions over the computer's coarse-grained macrostates. 
This can be viewed as a novel thermodynamic cost of computation, reflecting changes in the distribution over inputs
rather than the logical dynamics of the computation.
\end{abstract}

\maketitle

\section{Introduction}
The past few decades have seen great advances in nonequilibrium statistical physics~\cite{touchette2004information,sagawa2009minimal,dillenschneider2010comment,sagawa2012fluctuation,crooks1999entropy,crooks1998nonequilibrium,chejne2013simple,jarzynski1997nonequilibrium,esposito2011second,esposito2010three,parrondo2015thermodynamics,sagawa2009minimal,pollard2016second,wiesner2012information,still2012thermodynamics,prokopenko2013thermodynamic,prokopenko2014transfer},
resulting in many novel predictions and experiments~\cite{dunkel2014thermodynamics,roldan2014universal,berut2012experimental}. 
Some of the most important results of this research
have been powerful new tools for analyzing the \emph{dissipated work} (or ``dissipation'' for short) in nonequilibrium 
processes.
Dissipation is the amount of work done on an evolving system that exceeds the theoretical minimal amount
needed to drive such a system from its initial to its final distribution~\cite{esposito2011second,hasegawa2010generalization,takara2010generalization,deffner_information_2012,parrondo2015thermodynamics}.  
Equivalently, it is proportional to the (irreversible) entropy production during the course of the process, i.e., the total change in entropy
of the system minus the amount of entropy that flows from the heat bath to the system in the form of heat.

Several expressions for the amount of dissipation in any given process have been derived by exploiting the detailed fluctuation theorems (DFTs)~\cite{crooks1998nonequilibrium,crooks1999entropy,seifert_entropy_2005,jarzynski_rare_2006,seifert2012stochastic},
typically under the assumption of dynamics that obeys local detailed balance.
These results express the dissipation in terms of the Kullback-Leibler (KL) divergence \cite{cover_elements_2012,mack03}
between the probability density over state trajectories occurring in the original process
and the probability density %
under a special `time-reversed' 
version of the process.
However these results 
are impractical for quantifying dissipation in many cases of interest, since computing the KL divergence requires integration 
of a probability density over all possible trajectories.

Related research has investigated lower bounds on dissipation by studying \emph{optimal processes}.
These are processes that achieve minimal dissipation subject to some specified set of constraints~\cite{takara2010generalization,hasegawa2010generalization,parrondo2015thermodynamics}.
For example, optimal processes have been identified for transforming some desired initial Hamiltonian and state distribution into a different Hamiltonian and state distribution
under a finite-time constraint~\cite{schmiedl2007optimal,sivak2012thermodynamic,aurell2012refined}, or while obeying a constraint on allowable work fluctuations~\cite{funo_work_2016}. 
Some authors have also considered how changes to the initial distribution affect the work and dissipation 
\emph{if the process is  changed to be optimal for the new distribution}~\cite{hasegawa2010generalization,sagawa2009minimal,dillenschneider2010comment}.
Such research is concerned with processes that minimize dissipation, and more generally with how dissipation varies with changes to the process.

Here we consider a complementary problem, which to our knowledge has never been previously
analyzed. We suppose that there is a fixed process $\proc$, coupled to a heat bath that is at a constant temperature.
We then consider a very common real-world scenario, in which this same process
can be run with different initial distributions over states.
We ask, how does the the amount of work dissipated by $\proc$ vary with changes to the initial distribution? 
What is the maximal cost in extra dissipation that can arise by using one initial distribution rather than another?
 How do these answers depend on the details of the process $\proc$?

Surprisingly, we find that the dependence of dissipation on the initial distribuiton has a simple information-theoretic form. Let $q_0$ be an initial distribution over the states for which $\proc$ dissipates the minimal amount of work. We 
prove that the dissipation arising from using some arbitrary initial
distribution $r_0$ is the dissipation arising from using $q_0$, plus the reduction of
the Kullback-Leibler (KL) divergence between $r_0$ and $q_0$ from the beginning to the end of $\proc$.
The additional dissipation incurred when $\proc$ is initialized with $r_0 \ne q_0$ is independent
of all intermediate details of \emph{how} $\proc$ changes the initial distribution into the final one.

Our analysis provides a useful and novel tool for calculating dissipated work for a given thermodynamic
process run on a given initial distribution. For example,
suppose we design a process to be dissipationless (i.e., thermodynamically reversible) when run with some initial state distribution. Our analysis can be used to calculate exactly how much work would be dissipated if that process were run with some other state distribution.  %
As a demonstration (\ref{appendix:mandal-and-jarzynski}), we consider a published model of Maxwell's demon~\cite{mandal2012work}, a device that extracts work from an incoming stream of bits, and compute dissipation as a function of the distribution of bits. 

More generally, consider a fixed process connected to a heat bath that is at a constant temperature, which dissipates least work when prepared with some particular initial distribution.  For example, this might be a process in which a volume of gas expands while pushing against a piston and lifting a weight.  There will be some `optimal' initial distribution of the states of the gas which minimizes dissipated work.  Our results state how much more work will be dissipated when the gas is prepared with some other initial distribution.

After deriving this result, we extend it to analyze dissipation in a physical computer. More precisely, 
we suppose that there is a coarse-graining of the states of our system into a set of macrostates.  These macrostates are identified with logical values and the dynamics over the macrostates
identified with the (possibly noisy) computation. 
The initial distribution over the macrostates may reflect how a user of the computer initializes its logical values.
As before, we consider how the additional dissipation incurred by a computer, above and beyond the minimum, depends on the initial macrostate distribution.
We show that the additional dissipation is still given by the drop in KL divergence, only now stated in terms of distributions over the macrostates.

To illustrate the implications of this result for thermodynamics of computation, 
suppose we construct a process that performs a given computation, and that achieves zero dissipation for \emph{some} initial distribution over its 
states (e.g., when employed by one particular user of that computer). Our results quantify how much computer will dissipate if it is instead initialized according to a different distribution (e.g., if the computer is employed by some other user).

We emphasize that these results are equalities, not just bounds.  Furthermore, our results 
give dissipation in terms of a difference in two KL divergences, concerning initial and final \emph{state distributions}.
Thus they  differ fundamentally from previously derived DFTs, which give dissipation in terms of
a single KL divergence, concerning forward and time-reversed \emph{trajectory distributions}.  Moreover, in contrast to such DFTs, our results do not assume local detailed balance.

Our analysis of dissipated work should also be distinguished from earlier analyses of \emph{reversible work}, in particular
analyses expressing reversible work as a reduction of KL divergence between nonequilibrium and equilibrium distributions at initial and final times plus the difference of equilibrium free energies~\cite{takara2010generalization}.  Reversible work is the work required to perform a given transformation using an optimal process, and can be thermodynamically recovered by reversing the process. Dissipated work, on the other hand, is work that is irreversibly lost as entropy production. Furthermore, in general the KL divergences that arise in our analysis do not necessarily involve equilibrium
distributions.

There is one previously-known result that is a special case of our analysis: if a system is prepared with some nonequilibrium distribution $r_0$ and then 
undergoes a non-driven process in which it fully relaxes to equilibrium, the dissipated work is equal to the KL divergence between $r_0$ and the equilibrium distribution~\cite{kawai_dissipation:_2007}.  Our analysis generalizes this earlier result significantly, allowing for processes that do not relax fully to equilibrium. It also applies to processes that are driven by an external work reservoir, in which the equilibrium changes over time, during which the system can remain arbitrarily far from equilibrium at all times.

\section{Formal Background}

We consider
a physical system with a countable set of microstates $X$ that
evolves across a countable set of times $t \in \{0, \Delta \tau, 2 \Delta \tau, \ldots, 1\}$, 
while in contact with a heat bath at temperature $T$. 
We use $x_{0..1} := ( x_0, x_{\Delta \tau}, \ldots, x_1 )$ to indicate a particular trajectory through the system's state space.
The system may also be connected to a work reservoir throughout its evolution, which causes the
system's Hamiltonian to change with time. 
 We indicate the trajectory through the space of Hamiltonians as $H_{0..1} := ( H_0, H_{\Delta \tau}, \ldots, H_1 )$.

Note that the units of time are arbitrary and $\Delta \tau$ can 
be arbitrarily small (though non-zero). 
Accordingly our results hold exactly no matter how long
the process takes, and in particular even in the quasi-static limit.
The choice of countable state space and discretized time is used to simplify analysis, in line with much of the literature~\cite{crooks1999entropy,hatano1999jarzynski,chernyak2006path,still2012thermodynamics}. However, our approach should extend to continuous state space and continuous time.

Write the distribution of the system's state at $t$ as $p_t(x)$, or equivalently $p(x_t)$.
Due to thermal fluctuations and driving by the work reservoir, the system undergoes a stochastic dynamics,
represented by a
conditional distribution of trajectories given initial states, $p(x_{0..1}\vert x_{0})$ (we make no assumptions about whether this  dynamics is first-order Markovian or not).
The conditional distribution over trajectories in turn induces a conditional distribution of final states given initial states,
$p(x_1 \vert x_0) = \sum_{x'_{0..1}} \delta_{x'_1, x_1} p(x'_{0..1} \vert x_0)  $, which we sometimes refer to as
a \textbf{map} that takes initial states $x_0$ to final states $x_1$.

We refer to a given pair of $H_{0..1}$ and $p(x_{0..1}\vert x_{0})$
as a \textbf{(thermodynamic) process} operating on the system, indicated generically as $\proc$.
Note that any process  $\proc$ can be prepared with different initial distributions $p_0$, giving different trajectory probabilities $p(x_{0..1}):=p(x_{0..1}|x_0)p_0(x_0)$. 

Given a sequence of Hamiltonians $H_{0..1}$, the total work done on the system if it follows state trajectory $x_{0..1}$ is
\begin{align}
 W(x_{0..1}) = \sum_{t\in \{0,\Delta \tau,..,1\}} H_{t +\Delta \tau}(x_t) - H_t(x_t) \,.
\end{align}
For an initial distribution $p_0$, the expected work across all trajectories is
\[
\langle W \rangle_{p_0} = \sum_{x_{0..1}} p_0(x_0)p(x_{0..1}|x_0) W(x_{0..1}) \,.
\]

Suppose we seek to drive the system from some particular (possibly non-equilibrium)
 initial distribution $p_{0}$ to some final distribution $p_{1}$, while changing the Hamiltonian from $H_{0}$ to $H_{1}$.
 Define the \textbf{non-equilibrium free energy}~\cite{parrondo2015thermodynamics} of a system with Hamiltonian $H_t$ and distribution $p_t(x)$ as
\[
\mathcal{F}\left(H_t,p_t\right):= \langle H_t \rangle_{p_t} - kT \cdot S(p_t) \,,
\]
where $S(p) := -\sum_x p(x) \ln p(x)$ indicates Shannon entropy (in nats). (Note that $\mathcal{F}$ is equal to the equilibrium free energy when $p_t$ is the Boltzmann distribution for Hamiltonian $H_t$.) For any 
process $\proc$ that transforms $(p_0,H_0) \rightarrow (p_1,H_1)$, expected work is lower bounded by
\begin{equation}
\langle W \rangle_{p_0} \ge \mathcal{F}\left(H_{1},p_{1}\right)-\mathcal{F}\left(H_{0},p_{0}\right)  \,.
\label{eq:deltafreeenergy}
\end{equation}
This inequality reflects the modern understanding of the second law~\cite{takara2010generalization,deffner_information_2012,parrondo2015thermodynamics,esposito2011second}.%

The difference of non-equilibrium free energies is called the \textbf{reversible work}. %
Reversible work is the portion of expected work that could be recovered from the heat bath and system after the process finishes, by transforming  the system from $H_1,p_1$ back to $H_0,p_0$ in a thermodynamically reversible manner (in this way completing a thermodynamic cycle).
Reversible work can be either positive or negative, depending on $H_0, p_0, H_1$ and $p_1$. 

\textbf{Dissipated work}, or simply \textbf{dissipation}, is the portion of expected work that cannot be thermodynamically recovered~\cite{parrondo2015thermodynamics,takara2010generalization,kawai_dissipation:_2007}. It is written as
\begin{equation}
W_{d}(p_0) := \langle W \rangle_{p_0} - \left[\mathcal{F}(H_{1},p_{1})-\mathcal{F}(H_{0},p_{0}) \right] \,.
\label{eq:diss-definition}
\end{equation}
The dissipation associated with a process 
is always non-negative, and it is zero iff the process is thermodynamically reversible.
(Dissipation should not to be confused with the dissipated heat, which is the total energy
transferred to the heat bath, nor with expected total work minus the change in \emph{equilibrium} free energies, which is also sometimes called dissipated work~\cite{kawai_dissipation:_2007,gomez2008footprints,parrondo_entropy_2009}.)

Define $\mathcal{Q}(x_0)$ %
as the expected total heat transferred from the bath
to the system during the process if the system starts in $x_0$~\cite{crooks1998nonequilibrium}. By conservation of energy we can
write this as
\[
\mathcal{Q}(x_0) := \sum_{x_{0..1}'} p(x_{0..1}'|x_0) ( H_1(x_1') - H_0(x_0) - W(x_{0..1}') )  \,,
\]
so that the total expected heat transferred is $ \langle \mathcal{Q}(X_0) \rangle_{p_0} = \sum_{x_0} p(x_0) \mathcal{Q}(x_0)$.
This allows us to rewrite dissipation as
\begin{align}
W_d(p_0) = kT [ S(p_1) - S(p_0)] - \langle \mathcal{Q}(X_0) \rangle_{p_0} \,,
\label{eq:diss-heat}
\end{align}
where $p_1(x') = \sum_{x} p(x_1 \vert x_0) p_0(x_0)$ is the final state distribution when the process is initialized with $p_0$.
Thus, dissipation is proportional to the entropy change that does not correspond to heat exchanged with the heat bath,
which is called the (irreversible) \textbf{entropy production}~\cite{esposito2011second,deffner_nonequilibrium_2011,parrondo2015thermodynamics}.

In the remainder of this paper we choose units so that $kT = 1$.

\section{Dissipation due to incorrect priors}
\label{sec:mainresult}

Let $q_0$ be an initial distribution that achieves minimum dissipation for a given $\proc$,
\begin{align}
q_0 := \argmin_{p_0} W_d(p_0)  \,.
\label{eq:q_def}
\end{align}
We call $q_0$ the \textbf{prior} distribution for $\proc$ (for reasons made clear below).
We do not assume that the prior distribution is unique. 

While $q_0$ is an initial distribution that results in minimal dissipation, in general
$\proc$ may be prepared with some initial distribution $r_0$, which we call the \textbf{environment distribution}, that need
not equal $q_0$. By definition, 
\[
W_d(r_0) - W_d(q_0) \ge 0 \,.
\]
We call this extra dissipation when using $r_0$ rather than $q_0$ the \textbf{incorrect prior dissipation}.
Notice that if $\proc$ achieves zero dissipation for some initial distribution, then 
$W_d(q_0)=0$ %
and dissipation and incorrect prior dissipation are equivalent.

Several papers have shown that it is possible to design a process that implements any given
stochastic map $p(x_1 \vert x_0)$ with zero dissipation for any given initial distribution $p_0$~\cite{maroney2009generalizing,wolpert2016free,wolpert_landauer_2016a}. 
Incorrect prior dissipation first appeared in these analyses: it was shown that a 
\emph{particular type of process} that implements a given stochastic map and achieves zero dissipation for a particular $q_0$ will dissipate work when prepared with a different initial distribution $r_0 \ne q_0$.
Here we generalize these previous analyses; the main result of our paper is a simple expression for incorrect prior dissipation that 
applies to \emph{any} thermodynamic process. 

To derive our main result, note that by definition, the prior $q_0$ minimizes the differentiable function $W_d$ over the set of all valid probability distributions.  We assume that $q_0$ has full support, i.e., it is in the interior of the unit simplex. (This assumption will often hold; \ref{appendix:prior-support-conditions} presents one particular sufficient condition concerning $p(x_1 \vert x_0)$.) Then,
for any initial state distribution $r_0$, the directional derivative at $q_0$ must obey
\begin{align}
(r_0 - q_0) \cdot \nabla W_d(q_0) = 0 \,,
\label{eq:optimality-condition}
\end{align}
where $\cdot$ indicates the dot product.

Next we use \eqref{eq:diss-heat} to write the $|X|$ components of $\nabla W_d(p_0)$,
\begin{align}
\frac{\partial W_d}{\partial p(x_0)}(p_0) &= \Big[-\sum_{x_{1}}p(x_{1}\vert x_{0})\ln \Big(\sum_{x_0'} p_0(x_0') p(x_1|x_0') \Big) - 1 \Big] +\Big[ \ln p(x_{0} ) + 1 \Big] -\mathcal{Q}(x_0) \nonumber \\
& = -\sum_{x_{1}}p(x_{1}\vert x_{0})\ln p_1(x_1) + \ln p_0(x_{0} ) - \mathcal{Q}(x_0) \,.
\label{eq:gradient_w_d}
\end{align}
Combining \eqref{eq:gradient_w_d} and \eqref{eq:diss-heat} lets us express the inner products as %
\begin{align}
q_0 \cdot \nabla W_d(q_0) &= 
S(q_1) - S(q_0) - \langle Q \rangle_{q_0} = W_d(q_0)
\label{eq:like_a_linear} \\
\begin{split}
r_0 \cdot \nabla W_d(q_0) & = 
C(r_1\Vert q_1) - C(r_0 \Vert q_0) - \langle Q \rangle_{r_0} \\
& = D(r_1\Vert q_1) - D(r_0 \Vert q_0) + W_d(r_0) \,.
\end{split} 
\label{eq:cross_entropies}
\end{align}
where $C(p \Vert q) := -\sum_x p(x) \ln q(x)$ is the cross-entropy function and $D(p\Vert q) = \sum_x p(x) \ln \frac{p(x)}{q(x)} = C(p\Vert q) - S(p)$ 
is the Kullback-Leibler (KL) divergence~\cite{cover_elements_2012}. 

Combining Eqs.~\ref{eq:optimality-condition}, \ref{eq:like_a_linear}, and \ref{eq:cross_entropies} leads to our main result: incorrect prior dissipation for any distribution $r_0$ is
\begin{align}
W_d(r_0) - W_d(q_0) = D(r_{0}\Vert q_{0})-D(r_{1}\Vert q_{1}) \,.
\label{eq:main-result}
\end{align}
(See \ref{appendix:general-statement} for an extension of this result for the case where all distributions
are restricted to a convex subset of the unit simplex.)

Recall that the KL divergence $D(r\Vert q)$ is an information-theoretic 
measure of the distinguishability of distributions $r$ and $q$~\cite{cover_elements_2012}. Thus, our main result states that
incorrect prior dissipation measures the decrease in our ability to distinguish
whether the initial distribution was $q_0$ or $r_0$ as the system evolves
from $t=0$ to $t=1$.  Formally, this drop reflects the ``{contraction of KL divergence}'' under the action of the map 
$p(x_1|x_0)$~\cite{ahlswede_spreading_1976,cohen_relative_1993}.
It is non-negative due to the KL data processing inequality~\cite[Lemma 3.11]{csiszar_information_2011}.
(This is consistent with our main result, since
incorrect prior dissipation measures extra dissipation relative to the minimum possible.)

Interestingly, the contraction of KL divergence reflects the logical reversibility of the map $p(x_{1}\vert x_{0})$.
If $p(x_{1}\vert x_{0})$ specifies a logically-reversible map
from $x_0$ to $x_1$ (i.e., a permutation over $X$), then incorrect prior dissipation is 0 for all $r_0$.
At the other extreme, if $p(x_1 \vert x_0)$ is an input-independent map, where changing $x_0$ has no
effect on the resultant distribution over $x_1$, then $D(r_1\Vert q_1) = 0$ and incorrect prior dissipation reaches its maximum value of $D(r_0\Vert q_0)$. In addition, in this case the prior distribution that minimizes $W_d(.)$
is unique, since $W_d(r_0) = D(r_0\Vert q_0)=0$ iff $r_0 = q_0$.
More generally, in \ref{appendix:positive-dissipation-for-noninvertible} we prove that if and only if $p\left(x_{1}\vert x_{0}\right)$
is not a logically reversible map, then there must exist an $r_0$ such that $W_d(r_0) - W_d(q_0) >0$ .

For another perspective on \eqref{eq:main-result},
note that by the chain rule for KL divergence \cite[Eq. 2.67]{cover_elements_2012}, 
\begin{align}
D (r(X_{0},X_{1}) \Vert q(X_{0},X_{1}))
 & =D(r_0 \Vert q_0)+D(r(X_{1}\vert X_{0})\Vert q(X_{1}\vert X_{0}))\nonumber \\
 & =D(r_1 \Vert q_1)+D(r(X_{0}\vert X_{1})\Vert q(X_{0}\vert X_{1})) 
 \label{eq:rewrite} \,.
\end{align}
However, since $r(x_{1}\vert x_{0})=q(x_{1}\vert x_{0})=p(x_{1}\vert x_{0})$,
$D(r(X_{1}\vert X_{0}) \Vert q(X_{1}\vert X_{0}))=0$.
Thus, \eqref{eq:main-result} is equivalent to 
$$
W_{d}(r_0) - W_d(q_0) = D(r(X_{0}\vert X_{1})\Vert q(X_{0}\vert X_{1})) \,.
$$
(See also \cite{wolpert_landauer_2016a}.)
In this expression $r(x_{0}\vert x_{1})$ and $q(x_{0}\vert x_{1})$
are Bayesian posterior probabilities of the initial state conditioned
on the final state, for the assumed priors $r_0$ 
and $q_0$ respectively, and the shared likelihood function $p(x_{1}\vert x_{0})$.
(This Bayesian formulation of \eqref{eq:main-result} is why we refer to the initial distribution $q_0$ as a ``prior''.)

In \ref{appendix:general-statement}, we show that if $q_0$ is \emph{not} assumed to have full support, then the RHS of \eqref{eq:main-result} becomes a lower bound (rather than an equality) on the incorrect prior dissipation.

\section{Discussion of incorrect prior dissipation}
In this section, we present some important implications and generalizations of our main result, as well
as some caveats that are important to keep in mind.

Note that a thermodynamic process $\proc$ is specified by a large set of real numbers: the values of the Hamiltonian $H_{0..1}$ and the conditional distribution $p(x_{0..1}\vert x_0)$. (In fact, in the $\Delta \tau \rightarrow 0$ limit this set is infinite.)
However, 
by \eqref{eq:diss-heat}, the dissipation function $W_d(\cdot)$
can be specified using only $|X|^2$ real numbers: the $|X|$ values of 
$\mathcal{Q}(x_0)$ 
and the $|X|(|X| - 1)$ values of $p(x_1 \vert x_0)$.
Unfortunately, the values $\mathcal{Q}(x_0)$ may be impractical to compute for a given $\proc$, 
since they involve expectations over a very large set of trajectories.  Indeed, the distribution
over trajectories may not even be fully specified if some details of the process are unknown.

\eqref{eq:main-result} shows that $W_d(\cdot)$  can alternatively be parameterized by the $|X|^2$ numbers: the value of $W_d(q_0)$, the $|X|-1$ values of $q_0$, and the $|X|(|X| - 1)$ values of $p(x_1 \vert x_0)$.
This also means that, perhaps surprisingly, 
calculating the amount of dissipation above the minimum possible 
only requires knowledge of the stochastic map $p(x_1 \vert x_0)$ and a minimizer $q_0$, 
and does not depend on any specifics of the intermediate process. Given some initial distribution $r_0$, all physical details of how  $\proc$ manages to transform $q_0 \rightarrow q_1$ and $r_0 \rightarrow r_1$ are irrelevant for evaluating incorrect prior dissipation.

It is important to emphasize that our analysis above does not specify how to
find the minimizer $q_0$.
In some cases, it may be possible to find $q_0$ via numerical minimization of the convex function $W_d(p_0)$ over a $|X|^2$-dimensional space. (See \ref{appendix:convexity} for a proof that $W_d$ is convex.)
In others, such minimization may be achievable via analytical techniques, or it may be possible to 
analytically find an initial distribution that achieves zero dissipation (which must then be a minimizer).  
Some previous studies have used these kinds of techniques to find priors $q_0$
and our results can provide additional insight into those studies.
For example, one published model of Maxwell's demon used numerical methods to derive an inequality for dissipated work~\cite[Eq. 10]{mandal2012work}.
As we show in \ref{appendix:mandal-and-jarzynski}, our results can be used in a straightforward manner to derive this inequality analytically --- and in fact provide an exact expression for dissipated work.
It is also important to emphasize that our main result concerns only one 
contributor to the total dissipated work (namely the amount in addition to the minimum amount possible).
Moreover, dissipated work itself is just one contributor to expected total work.
Thus, for instance, the fact that  incorrect prior dissipation is related to the logical irreversibility of the map $p(x_1|x_0)$ has no direct implications for whether \emph{total} dissipation and/or {total} work is small for
a thermodynamic process with a logically reversible map~\cite{logical_rev_relevance}.
In addition, note that the prior distribution $q_0$, which minimizes dissipation, will not generally be the initial distribution that 
minimizes expected total work.  Indeed, since total expect work is linear in the initial distribution over states,
the distribution that minimizes expected total work is a delta function about
$x_0^\star = \argmin_{x_0} \sum_{x_{0..1}} p(x_{0..1}|x_0) W(x_{0..1})$. In general, that delta function distribution
will not  minimize dissipation.

There are some conditions, however, when incorrect prior dissipation \emph{can} be related to expected total work. Consider the case when the process $\proc$ is thermodynamically-reversible for some initial distribution, meaning that $W_d(q_0) =0$. Then, the expected work when $\proc$ is
prepared with initial distribution $r_0$ is 
\begin{align}
\langle W \rangle_{r_0} & =  D(r_{0}\Vert q_{0})-D(r_{1}\Vert q_{1}) + \mathcal{F}\left(r_{1},H_{1}\right)-\mathcal{F}\left(r_{0},H_{0}\right) \nonumber \\
& = \langle H_1 \rangle_{r_1} - \langle H_0 \rangle_{r_0} + C(r_0 \Vert q_0) - C(r_1 \Vert q_1) \,.
\label{eq:cross_entropy}
\end{align}
If, furthermore, both the initial and final Hamiltonians $H_0$ and $H_1$ are uniform over the space of allowed states, then expected work for initial distribution $r_0$ is
$C(r_0 \vert \vert q_0) - C(r_1 \vert \vert q_1)$. (See \cite{wolpert2016free} for
an example of a physical system where this is the case.) 

Finally, it is possible to generalize our main result in two important ways, as shown in \ref{appendix:general-statement}.
First, when the minimizer $q_0$ does \emph{not} have full support, incorrect prior dissipation is lower-bounded by (rather than equal to) the contraction of KL divergence. In addition, our main result can be generalized to the case when $q_0$ is not the minimizer of $W_d$ over all possible initial distributions, but only within some convex subspace of distributions.  Then, the result holds for any other initial distribution $r_0$ within the same subspace. The latter generalization is used in the next section to derive a coarse-grained version of \eqref{eq:main-result}.

\section{Thermodynamics of computation}

We now extend our main result, to apply to the thermodynamics of computation. Formally,  this means that
we analyze the implications of our main result for physical systems that perform information-processing 
operations over some coarse-grained degrees of freedom.

Recent advances in nonequilibrium
statistical physics \cite{sagawa2014thermodynamic, parrondo2015thermodynamics}
have extended and clarified the pioneering analysis of of Landauer,
Bennett and others \cite{landauer1961irreversibility,bennett1982thermodynamics,zurek1989thermodynamic,zure89b} regarding the fundamental thermodynamics cost of information processing.
In this section, we consider the implications of incorrect prior dissipation for thermodynamics of computation.

In keeping with previous analyses, we define  a \textbf{computer} 
as a physical system with microstates $x \in X$ undergoing a thermodynamic process $\proc$, together with 
a coarse-graining of $X$ into a set of \textbf{Computational Macrostates} (CMs) with labels $v \in V$
(the set of CMs are equivalent to what are called the ``information bearing degrees of freedom'' in~\cite{bennett2003notes}, and the ``information states'' in~\cite{deffner2013information}). $\proc$
induces a stochastic dynamics over $X$, and the (possibly non-deterministic) {computation} is identified
with the associated dynamics over CMs.
We use $\pi(v_{1} \vert v_{0})$ to indicate this dynamical process 
over CMs, i.e., to indicate a single iteration of the computation that maps \textbf{inputs} $v_0$
to \textbf{outputs} $v_1$. 
The canonical example of this kind of computation is a single iteration of a laptop, modifying the bit pattern in its memory (i.e., its CM)~\cite{bennett2003notes}.  
In practice, computers are usually designed to perform the same
operation over their CMs from one iteration to the next.  Formally, this means that their dynamics are first-order Markovian and time-homogeneous.

In previous work \cite{wolpert_landauer_2016a}, we showed that for
any given $\pi$ and input distribution, a computer can be designed
that implements $\pi$ with zero dissipation for that input distribution. Here, we instead consider how the amount of dissipation for a fixed, given computer depends on the choice of input distribution.
We recover a coarse-grained version of \eqref{eq:main-result}, expressing incorrect prior dissipation for a distribution over input CMs. Thus, the exact same equations that determine how dissipation varies with the initial distribution over \emph{micro}states also determine how dissipation
in a computer varies with the initial distribution 
over computational \emph{macro}states.

Formally, let $g : X \rightarrow V$ be
the coarse-graining function that maps the microstates of a computer to its CMs.
Let $s(x \vert v)$ be a fixed distribution over the microstates corresponding to 
the specified macrostate $v$, and so obeys $s(x \vert v)=0$ if $v \ne g(x)$. We use the random variables $V_0$ and $V_1$ to indicate the CM at the beginning and end of the process, respectively.
To avoid confusion between distributions over CMs and those over microstates, distributions over CM are superscripted 
with a $V$. Thus, we write $p_0^V(.)$ and $p_1^V(.)$ to indicate the distribution over CMs at $t=0$ and
at $t=1$, respectively, and similarly for $q_0^V, q_1^V, r_0^V$ and $r_1^V$. 

When combined with the conditional update distribution $\pi(v_1 \vert v_0)$, any initial
distribution $p^V_0$ over CMs induces a final
distribution over states of $V$ at $t=1$ in the obvious way. Such a $p^V_0$ also induces a 
$t=0$ mixture distribution over microstates, given by
averaging the distributions $s(x \vert v)$ over all possible $v$. It will be useful to write this mixture with the shorthand
\begin{flalign}
[\Phi(p^V)](x)=\sum_{v} s(x|v)p^V(v) = s(x|g(x))p^V(g(x)) \,.
\label{eq:phi-def}
\end{flalign}
Thus, $\Phi$(.) is a map that takes distributions over $V$ to distributions over $X$.
The image of $\Phi$, $\imgPhi$, is a convex subset of the set of distributions over $X$, containing all possible mixtures of $s(x|v)$ induced by distributions over $V$.

We make two assumptions in our analysis of computers, 
which capture some physical properties of what is commonly meant by ``computers'', both in the real world and in the literature on thermodynamics of computation.

First, we assume the initial distribution over microstates is determined by specifying the initial distribution over
CMs. 
This assumption reflects the fact that in current real-world
computers, the input is set by selecting some computational macrostate (e.g., setting the pattern of logical bits in memory).
It is \emph{not} selected by the user selecting a particular microstate of the system (which would
occur, for example, if the user set the positions and momenta of all atoms and electrons in the computer).
Formally, this assumption means that any allowed initial distribution $p_0$ must be an element of $\imgPhi$, and will satisfy $p_0(x_0) = [\Phi(p^V_0)](x_0)$ for some initial distribution over CMs $p^V_0$. We call such an initial distribution over CMs an \textbf{input distribution}.

Second, we assume that
the distribution of microstates, conditioned on the respective macrostate, is the same at the beginning and end of the thermodynamic process. 
This assumption guarantees that the dynamics of a computer's logical state is first-order Markovian and time-homogeneous; in other words, the computer can be run for multiple iterations, and it is guaranteed to obey the same logical rules in each iteration.  
We formalize this assumption by requiring that the dynamics obey
\begin{align*}
p(x_1 \vert v_0, v_1) = \frac{p(x_1,v_1|v_0) }{ p(v_1|v_0)}= \frac{ \sum_{x_0} \delta_{v_1,g(x_1)}p(x_1 \vert x_0) s(x_0 | v_0)}{ p(v_1|v_0)} = s(x_1|v_1)
\end{align*}
for all $x_1, v_0, v_1$. 
In words, this states that $v_0$ is conditionally independent of $x_1$ given $v_1$ (i.e., there is no information about $v_0$ ``hidden'' in the microstate $x_1$ beyond that provided by the fact that $x_1$ belongs to CM $v_1$). This assumption also means that as long as the initial microstate distribution $p_0$ is induced by some distribution over CMs, then the output microstate distribution $p_1$ is also induced by some distribution over CMs (i.e., that $p_1 \in \imgPhi$ so long as $p_0 \in \imgPhi$).  %
We refer to any process that obeys this condition as \textbf{computationally cyclic}.

When the computer is run with the microstate distribution $\Phi(p_0^V)$, the amount of dissipated work is $W_d(\Phi(p_0^V))$.  
Accordingly we refer to $W_d(\Phi(p_0^V))$ as the \emph{dissipation of the (macrostate) input distribution $p_0^V$}, 
and when clear from context, write it simply
as  $W_d(p_0^V)$.

In analogy with the case of dynamics over
$X$, we say that an input distribution $q_0^V$ is a \textbf{prior} for the computer if it achieves minimum dissipation among all input distributions.  As we did in our analysis of priors over microstates, we assume that 
the prior $q_0^V$ has full support.
(More formally, see \ref{appendix:prior-support-conditions} for a sufficient condition on $\pi$ under which this assumption will hold.)

Let $q_0 := \Phi(q_0^V) \in \imgPhi$ indicate the microstate distribution induced by $q_0^V$. By definition of $q_0^V$,
$q_0$ has minimum dissipation within the convex set $\imgPhi$.  Furthermore,  by our assumption that $q_0^V$ has full support, $\Phi(q_0^V)$ will be in the relative interior of $\imgPhi$.

Now consider any other input distribution $r_0^V$, as well as its associated microstate distribution $r_0 := \Phi(r_0) \in \imgPhi$.  Using the general statement of dissipation due to incorrect priors derived in \ref{appendix:general-statement},
\begin{align*}
W_{d}(\Phi(r_0^V)) - W_d(\Phi(q_0^V)) & =D(r_0\Vert q_0)-D(r_1\Vert q_1) \\
& =  D(r(V_{0},X_{0})\Vert q(V_{0},X_{0}))-D(r(V_{1},X_{1})\Vert q(V_{1},X_{1}))\\
& =  D(r_0^V\Vert q_0^V)-D(r_1^V\Vert q_1^V) \\
& \quad  +D(r(X_{0}\vert V_{0})\Vert q(X_{0}\Vert V_{0}))-D(r(X_{1}\vert V_{1})\Vert q(X_{1}\vert V_{1})) \,,
\end{align*}
where the second line follows because $v_{0}$ and $v_{1}$ are deterministic
functions of $x_{0}$ and $x_{1}$, and the third line follows from the chain
rule for KL divergence. Next, note that by definition
$r(x_{0}\vert v_{0})=s(x_{0}\vert v_{0})=q(x_{0}\vert v_{0})$. So $D(r(X_{0}\vert V_{0})\Vert q(X_{0}\vert V_{0})) = 0$.
In addition, by the cyclic condition,
$
r(x_1|v_1)=\sum_{v_0} r(v_0|v_1)p(x_1|v_0,v_1)=s(x_1|v_1)
$
and similarly for $q(x_1|v_1)$. So $D(r(X_{1}\vert V_{1})\Vert q(X_{1}\vert V_{1})) = 0$.

Combining leads to a coarse-grained version of our main result: for dynamics over CMs, incorrect prior dissipation for any input distribution $r_0^V$ is
\begin{align*}
W_{d}(r_0^V) - W_d(q_0^V)  = D(r_0^V\Vert q_0^V)-D(r_1^V\Vert q_1^V) \,.
\end{align*}

The obvious analog of \eqref{eq:cross_entropy} (and the associated discussion) holds for computers,
if we replace distributions over microstates by distributions over CMs. 
These results agree with the analysis for a specific model of a computer in~\cite{wolpert_landauer_2016a}.
However the analysis here holds for any computer, no matter how it operates.

As before, if $q_0^V$ does not have full support, we recover an inequality rather than an equality (\ref{appendix:general-statement}).

\section{Conclusion} 
For a fixed nonequilibrium process,
we have quantified the additional dissipation arising from using some arbitrary initial distribution, relative to the dissipation incurred when using the initial distribution that achieves minimal dissipation. This additional dissipation has a simple, information-theoretic form, being equal to the the contraction of KL divergence between the actual and optimal initial distributions over the course of the process. 

We also considered computers, i.e., processes that implement some stochastic map over a set of coarse-grained variables. We showed that our main result applies to distributions over coarse-grained states of a system,
so long as the fine-grained dynamics obey several conditions.
Landauer and co-workers pioneered analysis of the thermodynamic cost of computation; in
its modern formulation, Landauer's bound considers the minimal (dissipation-free) total work needed to perform a given computation~\cite{parrondo2015thermodynamics,esposito2011second}.  Our result extends
these analyses to include the dissipation cost of computation, and in particular its dependence on the initial
distribution of the computer's states.

Our results are derived with few assumptions.
They do not require that the dynamics obey local detailed balance, nor that they are Markovian.
In addition, they hold for both quasi-static and finite time processes, and regardless of how far the process is from equilibrium.

\emph{Acknowledgments} \textemdash
We would like to thank the Santa Fe Institute for helping to support this research. This paper was made possible through the support of Grant No. TWCF0079/AB47 from the Templeton World Charity Foundation, Grant No.
FQXi-RFP-1622 from the FQXi foundation, and Grant No. CHE-1648973 from the U.S. National Science Foundation. 
The opinions expressed in this paper are those of the authors and do not necessarily 
reflect the view of Templeton World Charity Foundation.

\appendix

\section{Dissipation due to incorrect priors over convex spaces}
\label{appendix:general-statement}

Let $\Delta$ be a convex subset of the set of all distributions over state space $X$.  For a given process $\proc$, define the prior distribution in $\Delta$ as
$$q_0 := \argmin_{p_0 \in \Delta} W_d(\proc, p_0)\,. $$

\begin{proposition}
For all initial distributions $r_0 \in \Delta$,
\label{prop1-general}
\begin{align}
W_d(r_0) - W_d(q_0) \ge D(r_{0}\Vert q_{0})-D(r_{1}\Vert q_{1})\,,
\label{prop1-appendix-ineq}
\end{align}
where $D(\cdot\Vert \cdot)$ is the KL divergence. If the prior distribution $q_0$ is in the relative interior of $\Delta$, the inequality is tight:
\begin{align}
W_d(r_0) - W_d(q_0) = D(r_{0}\Vert q_{0})-D(r_{1}\Vert q_{1}) \,.
\label{prop1-appendix-eq}
\end{align}

\end{proposition}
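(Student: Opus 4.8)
The plan is to observe that, once the contraction of KL divergence is subtracted off, the residual is an \emph{affine} function of the initial distribution, so the entire proposition reduces to a first-order optimality condition for the convex function $W_d$ on the convex set $\Delta$. Concretely, define
\[
\phi(r_0) := W_d(r_0) - \big[D(r_0\Vert q_0) - D(r_1\Vert q_1)\big] \,.
\]
The inner-product computations already carried out in Eqs.~\ref{eq:like_a_linear} and \ref{eq:cross_entropies} show that $\phi(r_0) = r_0 \cdot \nabla W_d(q_0)$ and that $\phi(q_0) = q_0 \cdot \nabla W_d(q_0) = W_d(q_0)$; in particular $\phi$ is affine in $r_0$. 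Hence the claimed inequality \eqref{prop1-appendix-ineq} is nothing but $\phi(r_0) \ge \phi(q_0)$, i.e. $(r_0 - q_0)\cdot\nabla W_d(q_0) \ge 0$, and the claimed equality \eqref{prop1-appendix-eq} is the vanishing of this same directional derivative. Everything therefore hinges on controlling the sign of $(r_0-q_0)\cdot\nabla W_d(q_0)$ using the fact that $q_0$ minimizes $W_d$ over $\Delta$.

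First I would establish the inequality. Since $W_d$ is convex (see \ref{appendix:convexity}) and $\Delta$ is convex, for any $r_0\in\Delta$ the segment $p_0^\lambda := (1-\lambda)q_0 + \lambda r_0$ lies in $\Delta$ for all $\lambda\in[0,1]$. Because $q_0$ minimizes $W_d$ over $\Delta$, the one-dimensional convex map $\lambda\mapsto W_d(p_0^\lambda)$ attains its minimum at $\lambda=0$, so its one-sided derivative there is non-negative. That derivative equals $(r_0-q_0)\cdot\nabla W_d(q_0) = \phi(r_0) - \phi(q_0)$, and rearranging gives exactly \eqref{prop1-appendix-ineq}.

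For the equality, suppose $q_0$ lies in the relative interior of $\Delta$. Then for every $r_0\in\Delta$ the point $q_0 - \epsilon(r_0 - q_0)$ also lies in $\Delta$ for all sufficiently small $\epsilon>0$, so we may move away from $r_0$ as well as toward it. Repeating the previous argument along this opposite direction yields $(q_0 - r_0)\cdot\nabla W_d(q_0)\ge 0$. Combined with the inequality just proved, the directional derivative must vanish, giving $\phi(r_0)=\phi(q_0)$ and hence \eqref{prop1-appendix-eq}.

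The step I expect to be the main obstacle is the differentiability of $W_d$ at $q_0$: the gradient in \eqref{eq:gradient_w_d} contains a $\ln p_0(x_0)$ term that diverges as $p_0(x_0)\to 0$, so $\nabla W_d$ is genuinely defined only away from the boundary of the simplex. The relative-interior hypothesis is precisely what removes this difficulty for the equality: a relative-interior point of a convex set of distributions has the maximal support among all points of $\Delta$, so $\supp(r_0)\subseteq\supp(q_0)$ for every $r_0\in\Delta$, all divergences appearing in $\phi$ are finite, and the directional derivatives of $W_d$ along the affine hull of $\Delta$ are finite; the two-sided argument is then fully rigorous. For the general inequality, where $q_0$ may sit on the relative boundary, I would avoid assuming differentiability and instead invoke convexity directly, since the one-sided directional derivative of a convex function always exists in $[-\infty,+\infty)$ and minimality of $q_0$ forces it to be non-negative. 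The delicate sub-case is $\supp(r_0)\not\subseteq\supp(q_0)$, where $D(r_0\Vert q_0)=+\infty$; here I would use that the entropy terms in \eqref{eq:diss-heat} drive the relevant boundary partials of $W_d$ toward $-\infty$, so that minimality of $q_0$ constrains this configuration, and it suffices to verify \eqref{prop1-appendix-ineq} on distributions $r_0$ dominated by $q_0$, the case of interest.
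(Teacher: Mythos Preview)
Your proposal is correct and follows essentially the same argument as the paper: both reduce the proposition to the first-order optimality condition $(r_0-q_0)\cdot\nabla W_d(q_0)\ge 0$ (with equality when $q_0$ is relative-interior and perturbations in both directions are admissible) via the identity $r_0\cdot\nabla W_d(q_0) = W_d(r_0) - [D(r_0\Vert q_0)-D(r_1\Vert q_1)]$ together with $q_0\cdot\nabla W_d(q_0)=W_d(q_0)$. Your extra care about boundary differentiability is more scrupulous than the paper's treatment, which simply writes $\nabla W_d(q_0)$ without comment, but it does not alter the core approach.
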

\begin{proof}
First use \eqref{eq:diss-heat} in the main text to write the $|X|$ components of $\nabla W_d(p_0)$  as
\begin{align}
\frac{\partial W_d}{\partial p(x_0)}(p_0) & =
\Big[-\sum_{x_{1}}p(x_{1}\vert x_{0})\ln \sum_{x_0'} p_0(x_0') p(x_1|x_0') - 1 \Big] +\Big[ \ln p(x_{0} ) + 1 \Big] -\mathcal{Q}(x_0)  \nonumber \\
& = -\sum_{x_{1}}p(x_{1}\vert x_{0})\ln p_1(x_1) + \ln p_0(x_{0} ) - \mathcal{Q}(x_0) \,.
\label{eq:gradient_w_d-appendix}
\end{align}
Note that by \eqref{eq:gradient_w_d} and \eqref{eq:diss-heat} in the main text, even though $W_d (p_0)$ is not a linear function of $p_0$, it is still true that for any $p_0$,
\begin{align}
W_d(p_0) = \sum_{x_0} p(x_0)  \frac{\partial W_{d}}{\partial p(x_0)}(p_0) = p_0 \cdot \nabla W_d(p_0) \,.
\label{eq:like_a_linear-appendix}
\end{align}

The prior $q_0$ minimizes $W_d$ in the convex space $\Delta$.  Then, the directional derivative at $q_0$ toward $r_0 \in \Delta$, written as $(r_0 - q_0) \cdot \nabla W_d(q_0)$, must be non-negative, since otherwise $W_d$ could be decreased by slightly perturbing $q_0$ toward $r_0$. Thus,
\begin{align}
(r_0 - q_0) \cdot \nabla W_d(q_0) \ge 0 \,.  \nonumber
\end{align}
By \eqref{eq:like_a_linear-appendix},
\begin{align}
r_0 \cdot \nabla W_d(q_0) \ge W_d(q_0) \,. \nonumber
\end{align}
Plugging in~\eqref{eq:gradient_w_d-appendix}, we rewrite,
\begin{align}
r_0 \cdot \nabla W_d(q_0) & = -\langle \mathcal{Q}\rangle_{r_0} - C(r_0\Vert q_0) + C(r_1\Vert q_1) \nonumber \\
&= W_d(r_0) - \left[ D(r_0\Vert q_0) - D(r_1\Vert q_1) \right] \,,
\label{eq:generalproof-appendix-kl-drop}
\end{align}
where $C(\cdot \Vert \cdot)$ is the cross-entropy.
Combining establishes the inequality Prop.~\ref{prop1-appendix-ineq}.

If $q_0$ is in the relative interior of $\Delta$, the directional derivatives at $q_0$ must be positive toward \emph{and} away-from $r_0$. Thus,
\[
(r_0 - q_0) \cdot \nabla W_d(q_0)  \ge  0 \text{ and } (q_0 - r_0) \cdot \nabla W_d(q_0) \ge 0  \,,
\]
leading to 
\begin{align}
(r_0 - q_0) \cdot \nabla W_d(q_0)  =  0  \,.
\label{eq:generalproof-appendix-both-sides}
\end{align}
Combining \eqref{eq:generalproof-appendix-both-sides}, \eqref{eq:like_a_linear-appendix}, and \eqref{eq:generalproof-appendix-kl-drop} establishes the equality Prop.~\ref{prop1-appendix-eq}.
\end{proof}

\section{Dissipated work is convex}
\label{appendix:convexity}

First, consider two initial distributions, specified by the conditional probability distribution $w(X_0=x_0\vert C=0)$ and $w(X_0=x\vert C=1)$, as well as the mixture $w(X_0=x)=\sum_c p(C=c) \, w(X=x\vert C=c)$. At the end of the process, these distributions are mapped to $w(X_1=x|C=0)$, $w(X_1=x|C=1)$, and $w(X_1=x)=\sum_c p_C(C=c)\,w(X_1=x|C=c)$.

To demonstrate that $W_d$ is convex, we will show that
\begin{align*}
p(C=0) W_d(w(X_{0}\vert C=0))+p(C=1) W_d(w(X_{0}\vert C=1)) \ge W_d(w(X_0))\,.
\end{align*}

First, we subtract the RHS from the LHS, while using the expression for dissipated work (\eqref{eq:diss-heat}).
The linear terms drop out, leaving the entropy terms:
\begin{align*}
& \,p(C=0) W_d(w(X_{0}\vert C=0))+p(C=1) W_d(w(X_{0}\vert C=1))-W_d(w(X_0)) \\
=& \, p(C=0) \left[S(w(X_{1}\vert C=0)) - S(w(X_{0}\vert C=1)) \right] \\
& +p(C=1) \left[ S(w(X_{1}\vert C=1)  - S(w(X_{0}\vert C= 1))\right] \\
& -  \left[ S(w(X_{1})) - S(w(X_{0})) \right] \\
=&\, MI(X_{0};C)-MI(X_{1};C) \\
\ge &\, 0 \,.
\end{align*}
The last line follows from the data processing inequality for mutual information~\cite{cover_elements_2012}.

\section{Analysis of `Maxwell's demon' model of Mandal and Jarzynski} 
\label{appendix:mandal-and-jarzynski}

We consider the work dissipated in the thermodynamic process corresponding
to one ``interaction interval'' of the information-processing ``demon''
described in \cite{mandal2012work}. Let $X=\left\{ A0,B0,C0,A1,B1,C1\right\} $
represent the state space of the model. Also let $V$ be a coarse-graining
of $X$ into a binary state (corresponding to the state of the bit
on the tape), where $V=0$ corresponds to $\left\{ A0,B0,C0\right\} $
and $V=1$ corresponds to $\left\{ A1,B1,C1\right\} $. As in our main text,
it will useful to distinguish distributions over $V$ from those over $X$
with a superscript, e.g., writing $r^V_0$ rather than $r_0$.

The model is parameterized by:
\begin{enumerate}
\item $\tau$: the amount of time the demon interacts with each incoming
bit, i.e., the length of a single interaction interval. In our framework,
this means that $t\in\left[0,1\right]$ maps to a duration of physical
time $\tau$.
\item $\delta$: set the `excess' of 0s in the incoming bit distribution (i.e., the distribution
of $V$ at the beginning of the interaction interval), via $\delta=r_{0}^{V}\left(V=0\right)-r_{0}^{V}\left(V=1\right)$.
\item $\epsilon$: set the `excess' of 0 in the equilibrium distribution of outgoing bits (i.e.,
the distribution of $V$ at the end of an interaction interval as
$\tau\rightarrow\infty$), via $\epsilon = p_\text{eq}^{V}\left(V=0\right)-p_\text{eq}^{V}\left(V=1\right)$
\cite[Eq. 6b]{mandal2012work}.
\end{enumerate}
The parameter $\epsilon$ is used to define a continuous-time $\left|X\right|\times\left|X\right|$
rate matrix $\mathscr{R}$ \cite[Eq. S1]{mandal2012work} specifying system
dynamics. This rate matrix is then used to define a transition
matrix $\Pi^{\tau}:=e^{\tau\mathscr{R}}$ for interactions of duration $\tau$.%

In \cite{mandal2012work}, it is noted that dissipation is 0 when
$\delta=\epsilon$. In their Supporting Information, the authors provide
a complex derivation showing that dissipation is non-negative for
other cases. This is shown analytically for the quasi-static limit of interval lengths
($\tau\rightarrow\infty$, i.e. when each interaction interval takes
an infinite amount of time), but only numerically for finite interval lengths
\cite[Eq. S20]{mandal2012work}. Here we show how to use the results in
our paper to prove \emph{strict }positivity simply, and analytically,
for all time scales.

Note that $\mathscr{R}$ is irreducible, and hence has a unique stationary
distribution, which we call $p_\text{eq}^X\left(x\right)$ ($p_\text{eq}^V$ is a marginalization of this stationary distribution onto the $V$ subspace). When the 6-state system
is prepared with initial distribution $p^X_\text{eq}$, no work gets done
\cite{mandal2012work} and the nonequilibrium free energy doesn't
change, hence $W_{d}\left(p_\text{eq}^X\right)$ is 0. Therefore, in the language of our main text, $p_{eq}$ is a prior distribution for this thermodynamic process, since no other initial distribution can achieve lower dissipation.

Using our main result, we write dissipation when the system is prepared with initial distribution $r_0^X$ and allowed to interact for duration $\tau$ as
\begin{align}
W_{d}\left(r_{0}^X\right)-W_{d}\left(p_\text{eq}^X\right) & =W_{d}\left(r_{0}^X\right) \nonumber \\
& =D\left(r_{0}^X \Vert p_\text{eq}^X\right)-D\left(\Pi^{\tau}r_{0}^X\Vert\Pi^{\tau} p_\text{eq}^X\right)\label{eq:diss-statement-mj}\\
& =D\left(r_{0}^X\Vert p_\text{eq}^X\right)-D\left(\Pi^{\tau}r_{0}^X\Vert p_\text{eq}^X\right)\nonumber \\
& >0\;\;\;\;\text{whenever }r_{0}^X\ne p_\text{eq}^X  \nonumber 
\end{align}
where the inequality arises from the fact that irreducible rate matrices
have strict convergence to equilibrium~\cite[Section 3.5]{greven2003entropy}.

Note that $\epsilon=\delta$ means that $r_{0}^{V}\left(v\right)=p_\text{eq}^{V}(v)$.
Thus $\epsilon=\delta$ is a necessary condition for $r_{0}^X(x)=p_\text{eq}^X(x)$
(though not sufficient, since we would also need $r_{0}(x|v)=p_\text{eq}(x|v)$).
We have thus shown that $\epsilon=\delta$ is a necessary condition for
dissipation to be 0, and that when $\epsilon\ne\delta$, dissipation
is guaranteed to be strictly positive.

Observe also that for any specific values $\tau,\epsilon,\delta$,
and $r_{0}\left(x\vert v\right)$, we can use Eq. \ref{eq:diss-statement-mj}
above to compute dissipation exactly.

\section{Sufficient conditions for prior to have full support} 
\label{appendix:prior-support-conditions}

In this section of the SM we assume that all components of $p(x_{1}\vert x_{0})$ are nonzero
and that $\mathcal{Q}(x_{0})$ is finite for all $x_{0}$, and show that this
means that both minimizers $q(x_0)$ and $q(v_0)$ have full support.

To begin, expand~\eqref{eq:gradient_w_d} in the main text to write
\begin{align}
\frac{\partial W_{d}}{\partial p(x_{0})}(p_{0})= %
-\mathcal{Q}(x_{0})-\sum_{x_{1}}p(x_{1}\vert x_{0})\ln\left[\frac{\sum_{x_{0}'}p(x_{1}\vert x_{0}')p(x_{0}')}{p(x_{0})}\right]
\label{eq:diss-deriv}
\end{align}
for any distribution $p_0$ over $X$.

Define $q_0 := \argmin_{p_0 \in \Delta} W_d(p_0)$, where $\Delta$ is the $|X|$-dimensional unit simplex.  To show that $q_0$ has full support, hypothesize
that there exists some $x_{0}^{\star}$ such that $q(x_{0}^{\star})=0$.
Now consider the one-sided derivative 
$\frac{\partial W_{d}}{\partial p(x_{0}^{\star})}(q_{0})$.
By the assumption that  $p\left(x_{1}\vert x_{0}\right)>0$ for all $x_0, x_1$,
the numerator inside the logarithm in \eqref{eq:diss-deriv} is nonzero, while
by hypothesis the denominator is 0. Thus, the argument of the logarithm
is positive infinite and (since $\mathcal{Q}(x_{0}^{\star})$ is finite, by assumption)
$\frac{\partial W_{d}}{\partial p(x_{0}^{\star})}(q_{0})$
is negative infinite. Moreover, for any $x'_{0}$ where $q(x'_{0})>0$,
$\frac{\partial W_{d}}{\partial p_{0}(x_{0}^{'})}(q_{0})$
is finite. This means that $W_{d}(q_{0})$ can be reduced by increasing
$q(x_{0}^{\star})$ and (to maintain normalization) reducing $q(x'_{0})$,
contrary to the definition of $q_{0}$ as a minimizer. Therefore our
hypothesis must be wrong.

Next, consider the prior input distribution $q_0^V := \argmin_{p_0^V \in \Delta^V} W_d(\Phi(p_0^V))$, where $\Delta^V$ is the $|V|$-dimensional unit simplex.
To show that $q_0^V$ has full support under the above assumptions, consider the partial derivative of dissipation wrt to each entry of the input probability distribution, $\frac{\partial  W_d(\Phi({p}_V))}{\partial p_V(v_{0})}$.
Let $p_0 := \Phi(p_0^V)$, and then use the chain rule, \eqref{eq:phi-def}, \eqref{eq:diss-deriv}, and then \eqref{eq:phi-def}  in the main text again to write
\begin{align}
\frac{\partial  W_d(\Phi({p}_V))}{\partial p_V(v_{0})}\left(p_0^V\right) & =\sum_{x_{0}}\frac{\partial W_{d}}{\partial p(x_{0})} (p_0) \; \frac{\partial [\Phi({p}_V)](x_{0})}{\partial p_V(v_{0})}\left(p_0^V\right) \nonumber \\
& =\sum_{x_{0}}\frac{\partial W_{d}}{\partial p(x_{0})} (p_0) \; s\left(x_{0}\vert v_{0}\right) \nonumber \\
& =\sum_{x_{0}}\left[-\mathcal{Q}(x_{0})-\sum_{x_{1}}p(x_{1}\vert x_{0})\ln\frac{\sum_{x_{0}'}p(x_{1}\vert x_{0}')p(x_{0}')}{p(x_{0})}\right]s\left(x_{0}\vert v_{0}\right) \nonumber\\
& =\sum_{x_{0}}\left[-\mathcal{Q}(x_{0})-\sum_{x_{1}}p(x_{1}\vert x_{0})\ln\frac{\sum_{x_{0}'}p(x_{1}\vert x_{0}')p(x_{0}')}{s\left(x_{0}\vert v_{0}\right)p_V(v_{0})}\right]s\left(x_{0}\vert v_{0}\right)
\label{eq:diss-deriv-v}
\end{align}
for any distribution $p_V^0$ over $V$.

Proceeding as before, hypothesize that there exists some $v_{0}^{\star}$
such that $q_V(v_{0}^{\star}) = 0$, and 
use~\eqref{eq:diss-deriv-v} to evaluate 
$\frac{\partial W_{d}(\Phi(p_V))}{\partial p(v_{0}^{\star})}(q_0^V)$.
By our hypothesis, the associated value of the denominator in the logarithm in~\eqref{eq:diss-deriv-v} is zero.
Since $p(x_1 \vert x_0)$ is always nonzero by assumption, this means the sum over $x_1$ is positive infinite.
Since by assumption $\mathcal{Q}(x_0)$ is bounded, this means
that $\frac{\partial W_{d}(\Phi(p_V))}{\partial p(v_{0}^{\star})}(q_0^V)$
is negative infinite. At the same time, $\frac{\partial W_{d}(\Phi(p_V))}{\partial p(v_{0}')}(q_0^V)$
is finite for any $v_{0}'$ where $q_V(v'_{0})>0$. Thus
$W_{d}(\Phi(q_{0}^V))$ can be reduced by increasing $q_V(v_{0}^{\star})$
and (to maintain normalization) reducing $q_V(v'_{0})$, contrary
to the definition of $q_0^V$ as a minimizer. Therefore our hypothesis
must be wrong.

\section{Proof of strictly positive dissipation for non-invertible maps\label{appendix-show-dissipation}}
\label{appendix:positive-dissipation-for-noninvertible}

Suppose the driven dynamics $p\left(x_{1}\vert x_{0}\right)$ is a
stochastic map from $X \rightarrow X$ that results
in minimal dissipation for some prior distribution $q_0$.
\begin{thm}
Suppose that $q_0$ has full support. Then, there exists $r_0$
with incorrect prior dissipation $W_d(r_0) - W_d(q_0) >0$ iff $p\left(x_{1}\vert x_{0}\right)$
is not an invertible map.
\end{thm}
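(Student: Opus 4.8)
The plan is to reduce everything to the main-result equality \eqref{eq:main-result}, which is available because $q_0$ has full support: for every initial distribution $r_0$,
\[
W_d(r_0)-W_d(q_0)=D(r_0\Vert q_0)-D(r_1\Vert q_1)=D(r(X_0\vert X_1)\Vert q(X_0\vert X_1))\ge 0,
\]
where the last form is the Bayesian reformulation from \eqref{eq:rewrite}. Thus the existence of an $r_0$ with strictly positive incorrect prior dissipation is exactly the strictness of the KL-contraction for some $r_0$, and I would prove both implications by contraposition.

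For the direction ``invertible $\Rightarrow$ no dissipation'', I would assume $p(x_1\vert x_0)=\delta_{x_1,\sigma(x_0)}$ for a permutation $\sigma$ of $X$ and show the contraction is never strict. Pushing forward gives $r_1(x_1)=r_0(\sigma^{-1}(x_1))$ and $q_1(x_1)=q_0(\sigma^{-1}(x_1))$, so relabeling $x_1=\sigma(x_0)$ yields
\[
D(r_1\Vert q_1)=\sum_{x_1}r_0(\sigma^{-1}(x_1))\ln\frac{r_0(\sigma^{-1}(x_1))}{q_0(\sigma^{-1}(x_1))}=\sum_{x_0}r_0(x_0)\ln\frac{r_0(x_0)}{q_0(x_0)}=D(r_0\Vert q_0).
\]
Hence $W_d(r_0)-W_d(q_0)=0$ for every $r_0$, so no $r_0$ with positive incorrect prior dissipation can exist.

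For the converse, I would assume $W_d(r_0)-W_d(q_0)=0$ for \emph{every} $r_0$ and deduce that $p$ is a permutation. The key step is to feed in point-mass priors $r_0=\delta_a$ for each $a\in X$; these are legitimate since $q_0$ has full support, so $D(\delta_a\Vert q_0)=-\ln q_0(a)$ is finite and \eqref{eq:main-result} applies. For such a prior the posterior $r(x_0\vert x_1)$ collapses to $\delta_a$ on every output $x_1$ with $p(x_1\vert a)>0$, so vanishing of $D(r(X_0\vert X_1)\Vert q(X_0\vert X_1))$ forces $q(a\vert x_1)=1$ on those outputs. Because $q_0$ has full support, $q(a\vert x_1)=\tfrac{p(x_1\vert x_0)q_0(x_0)}{q_1(x_1)}$ can equal $1$ only if $p(x_1\vert x_0)=0$ for all $x_0\ne a$; i.e., every output reachable from $a$ is reachable from $a$ alone. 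Letting $a$ range over $X$, this says each row of the matrix $p(x_1\vert x_0)$ (fixed $x_1$) has at most one nonzero entry. A finite counting argument then closes it: each column sums to $1$ and so has at least one nonzero entry, there are $\vert X\vert$ rows and $\vert X\vert$ columns, hence there are exactly $\vert X\vert$ nonzero entries, one per row and one per column; a column with a single entry summing to $1$ has that entry equal to $1$, so $p$ is a permutation matrix.

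The main obstacle is this converse, and within it the combinatorial upgrade from ``each output has a unique pre-image'' to ``$p$ is a deterministic bijection'': the point-mass/full-support step only yields that distinct inputs have disjoint reachable output-sets, and it is precisely the finiteness of $X$ (equal cardinality of inputs and outputs together with column-stochasticity) that rules out a single input spreading its mass over several private outputs and thereby forces determinism. I would accordingly state the result for finite $X$, and I would flag that for countably infinite $X$ the equivalence can fail unless ``invertible'' is read in the information-preserving sense (the input recoverable from the output) rather than strictly as a permutation, since a map that deterministically fans each input out onto a private block of outputs preserves $D(r_0\Vert q_0)$ for all $r_0$ without being a permutation.
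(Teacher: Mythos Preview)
Your proof is correct and, for the non-trivial direction, takes a genuinely different route from the paper.

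For the implication ``permutation $\Rightarrow$ zero incorrect prior dissipation for every $r_0$'', both arguments are essentially the same: KL divergence is invariant under bijective relabeling.

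For the converse, the paper argues \emph{constructively and spectrally}: writing $p(x_1\vert x_0)$ as a stochastic matrix $M$, if $M$ is not a permutation then $|\det M|<1$, so $M$ has a left eigenvector $\mathbf{s}$ with eigenvalue $|\lambda|<1$ and $\mathbf{s}\mathbf{1}^T=0$; setting $r_0:=q_0+\mathbf{s}$ (suitably scaled so that $r_0$ is a valid distribution) and assuming $W_d(r_0)-W_d(q_0)=0$, the paper manipulates the equality $r(x_0\vert x_1)=q(x_0\vert x_1)$ into $\|\mathbf{s}M\|_1=\|\mathbf{s}\|_1$, contradicting $\|\mathbf{s}M\|_1=|\lambda|\,\|\mathbf{s}\|_1<\|\mathbf{s}\|_1$. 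You argue \emph{structurally and combinatorially}: probing with every point mass $r_0=\delta_a$ and reading off from $D(r(X_0\vert X_1)\Vert q(X_0\vert X_1))=0$ that distinct inputs have disjoint output supports, then closing with a pigeonhole count forcing a permutation. Your argument is more elementary---no determinants or eigenvectors---and directly characterizes when KL contraction is never strict; the paper's argument, by contrast, hands you an explicit witness $r_0$ with strictly positive dissipation and links the phenomenon to the spectral gap of $M$. Your remark that the finiteness of $X$ is what makes the combinatorial step go through is a genuine refinement: the paper's determinant argument also tacitly assumes finite $X$ but does not flag this.
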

\begin{proof}

If $q_0$ has full support, then $\supp r_0 \subseteq \supp q_0$ for all $r_0$. Then \eqref{eq:main-result} states that if initial distribution $r_0$ is used, extra dissipation is equal to
\begin{align}
W_d(r_0) - W_d(q_0) & =D(r_0\Vert q_0)-D(r_1\Vert q_1) \nonumber \\
&  =D(r(X_{0}\vert X_{1})\Vert q(X_{0}\vert X_{1})) 
\label{eq:condKL}
\end{align}
KL divergence is invariant under invertible transformations.  Therefore, if $p\left(x_{1}\vert x_{0}\right)$ is an invertible map, then $D(r_0\Vert q_0) = D(r_1\Vert q_1) \implies W_d(r_0) - W_d(q_0) = 0$  $\forall \; r_0$.

We now prove that if $p\left(x_{1}\vert x_{0}\right)$ is not an invertible
map, then there exists $r_0$ such that $W_{d}(r_0) - W_d(q_0) >0$.
For simplicity, write the dynamics $p\left(x_{1}\vert x_{0}\right)$
as the right stochastic matrix $M$. Because $M$ is a right stochastic
matrix, it has a right (column) eigenvector $\mathbf{1}^{T}=(1,\dots,1)^{T}$
with eigenvalue 1. 

Furthermore, it is known that if $M$ is not an invertible map, i.e. permutation matrix, then $|\det M| < 1$~\cite{goldberg1966upper}.  Since the determinant is the product of the eigenvalues and the magnitude of any eigenvalue of a stochastic matrix is upper bounded by 1, $M$ must have at least one eigenvalue $\lambda$ with $|\lambda| < 1$.  Let $\mathbf{s}$ represent the non-zero left eigenvector corresponding to $\lambda$. Note that due to biorthgonality of eigenvectors, $\mathbf{s}\mathbf{1}^{T}=0$.
We use $s\left(x\right)$ to refer to elements of $\mathbf{s}$ indexed
by $x\in X$. Without loss of generality, assume $\mathbf{s}$
is scaled such that $\max_{x}\left|s\left(x\right)\right|=\min_{x_{0}}q\left(x_{0}\right)$ (which is greater than 0, by assumption that $q_0$ has full support).

We now define $r_0$ as
\[
r\left(x_{0}\right):=q\left(x_{0}\right)+s\left(x_{0}\right) 
\]
Due to the scaling of $\mathbf{s}$ and because $\mathbf{s}\mathbf{1}^{T}=0$,
$r_0$ is a valid probability distribution.

We use the notation $s\left(x_{1}\right):=\sum_{x_{0}}s\left(x_{0}\right)p\left(x_{1}\vert x_{0}\right)$
and $r\left(x_{1}\right):=\sum_{x_{0}}r\left(x_{0}\right)p\left(x_{1}\vert x_{0}\right)=q\left(x_{1}\right)+s\left(x_{1}\right)$.
We also use the notation $\mathcal{C}:=\supp r_1$.
The fact that $q_0$ has full support also means that $\mathcal{C} \subseteq \supp q_1$.

The proof proceeds by contradiction. Assume that $W_d(r_0) - W_d(q_0) =0$. Using \eqref{eq:condKL} and due to properties of KL divergence, this means that for each $x_{0}\in X$ and $x_{1}\in\mathcal{C}$,
\begin{align*}
q\left(x_{0}\vert x_{1}\right) & =r\left(x_{0}\vert x_{1}\right)\\
\frac{q\left(x_{0}\right)p\left(x_{1}\vert x_{0}\right)}{q\left(x_{1}\right)} & =\frac{r\left(x_{0}\right)p\left(x_{1}\vert x_{0}\right)}{r\left(x_{1}\right)}\\
\frac{r\left(x_{1}\right)}{q\left(x_{1}\right)}p\left(x_{1}\vert x_{0}\right) & =\frac{r\left(x_{0}\right)}{q\left(x_{0}\right)}p\left(x_{1}\vert x_{0}\right)\\
\frac{q\left(x_{1}\right)+s\left(x_{1}\right)}{q\left(x_{1}\right)}p\left(x_{1}\vert x_{0}\right) & =\frac{q\left(x_{0}\right)+s\left(x_{0}\right)}{q\left(x_{0}\right)}p\left(x_{1}\vert x_{0}\right)\\
\frac{s\left(x_{1}\right)}{q\left(x_{1}\right)}p\left(x_{1}\vert x_{0}\right) & =\frac{s\left(x_{0}\right)}{q\left(x_{0}\right)}p\left(x_{1}\vert x_{0}\right)\\
s\left(x_{1}\right)q\left(x_{0}\vert x_{1}\right) & =s\left(x_{0}\right)p\left(x_{1}\vert x_{0}\right) 
\end{align*}
Taking absolute value of both sides gives
\begin{align*}
\left|s\left(x_{1}\right)\right|q\left(x_{0}\vert x_{1}\right)=\left|s\left(x_{0}\right)\right|p\left(x_{1}\vert x_{0}\right) 
\end{align*}
Summing over $x_{0}\in X$ and $x_{1}\in\mathcal{C}$,
\begin{align}
\sum_{x_{1}\in\mathcal{C}}\sum_{x_{0} \in X}\left|s\left(x_{1}\right)\right|q\left(x_{0}\vert x_{1}\right) & =\sum_{x_{1}\in\mathcal{C}}\sum_{x_{0} \in X}\left|s\left(x_{0}\right)\right|p\left(x_{1}\vert x_{0}\right)\nonumber \\
\sum_{x_{1}\in X}\left|s\left(x_{1}\right)\right|-\sum_{x_{1}\notin\mathcal{C}}\left|s\left(x_{1}\right)\right| & =\sum_{x_{1} \in X}\sum_{x_{0} \in X}\left|s\left(x_{0}\right)\right|p\left(x_{1}\vert x_{0}\right) \nonumber \\
& \;\;\;\;- \sum_{x_{1}\notin\mathcal{C}}\sum_{x_{0} \in X}\left|s\left(x_{0}\right)\right|p\left(x_{1}\vert x_{0}\right)  \label{eq:summing-both-sides}
\end{align}
Note that for all $x_{1}\notin\mathcal{C}$, $r\left(x_{1}\right)=0$,
meaning that $s\left(x_{1}\right)=-q\left(x_{1}\right)$. Thus, 
$$\sum_{x_{1}\notin\mathcal{C}}\left|s\left(x_{1}\right)\right|=\sum_{x_{1}\notin\mathcal{C}}q\left(x_{1}\right) $$

Furthermore, for all $x_{1}\notin\mathcal{C}$, $r\left(x_{1}\right)=\sum_{x_{0}}r\left(x_{0}\right)p\left(x_{1}\vert x_{0}\right)=0$.
Thus, for all $x_{0}\in X$ where $p\left(x_{1}\vert x_{0}\right)>0$
for some $x_{1}\notin\mathcal{C}$, $r\left(x_{0}\right)=0$, meaning
$s\left(x_{0}\right)=-q\left(x_{0}\right)$. This allows us to rewrite
the last term in \eqref{eq:summing-both-sides} as
\begin{flalign*}
&\sum_{x_{1}\notin\mathcal{C}}\sum_{x_{0} \in X}\left|s\left(x_{0}\right)\right|p\left(x_{1}\vert x_{0}\right) \\
&\;\;=\sum_{x_{1}\notin\mathcal{C}}\sum_{x_{0}:p\left(x_{1}\vert x_{0}\right)>0}\left|s\left(x_{0}\right)\right|p\left(x_{1}\vert x_{0}\right) 
\\
&\;\;=\sum_{x_{1}\notin\mathcal{C}}\sum_{x_{0}:p\left(x_{1}\vert x_{0}\right)>0}q\left(x_{0}\right)p\left(x_{1}\vert x_{0}\right) \\
&\;\;=\sum_{x_{1}\notin\mathcal{C}}q\left(x_{1}\right) 
\end{flalign*}

Cancelling terms that equal $\sum_{x_{1}\notin\mathcal{C}}q\left(x_{1}\right)$
from both sides of Eq. \eqref{eq:summing-both-sides}, we rewrite
\begin{align}
\sum_{x_{1}}\left|s\left(x_{1}\right)\right| & =\sum_{x_{1}}\sum_{x_{0}}\left|s\left(x_{0}\right)\right|p\left(x_{1}\vert x_{0}\right)=\sum_{x_{0}}\left|s\left(x_{0}\right)\right| 
\label{eq:l1-eq}
\end{align}
In matrix notation, Eq. \eqref{eq:l1-eq} states that
\begin{equation}
\left\Vert \mathbf{s}M\right\Vert _{1}=\left\Vert \mathbf{s}\right\Vert _{1} 
\label{eq:l1norm-eq}
\end{equation}
where $\left\Vert \cdot\right\Vert _{1}$ indicates the vector $\ell_{1}$
norm. However, by definition $\mathbf{s}M=\lambda\mathbf{s}$. Hence,
\[
\left\Vert \mathbf{s}M\right\Vert _{1}=\left\Vert \lambda\mathbf{s}\right\Vert _{1}=\left|\lambda\right|\left\Vert \mathbf{s}\right\Vert _{1}<\left\Vert \mathbf{s}\right\Vert _{1} 
\]
meaning that Eq. \eqref{eq:l1norm-eq} cannot be true and the original
assumption $W_d(r_0) - W_d(q_0) =0$ is incorrect. We have shown that
for non-invertible maps, there always exists an $r_0$ for which
$W_d(r_0) - W_d(q_0) >0$.
\end{proof}

\section*{References}
\bibliographystyle{iopart-num}
\bibliography{unified}

\end{document}